\newtheorem{proposition}{Proposition}[section]
\theoremstyle{definition}
\newtheorem{definition}[proposition]{Definition}
\newtheorem{notation}[proposition]{Notation}
\title{Bisimulation and p-morphism for branching-time logics with indistinguishability relations}
\author{Alberto Gatto\\
\footnotesize Department of Computing, Imperial College London, London SW7 2AZ, UK\\
\footnotesize \url{http://www.doc.ic.ac.uk/~ag2512/}\\
\footnotesize February 2013}
\date{}
\begin{document}

\maketitle

\begin{abstract}
In Zanardo, 1998,
the Peircean semantics
for branching-time logics is enriched with a notion of
indistinguishability at a moment $t$ between histories
passing through $t$.
Trees with indistinguishability relations provide a semantics for
a temporal language with tense and modal operators.
In this paper a notion of p-morphism and a notion of bisimulation,
wrt this language and semantics,
are given and a number of preservation results are proven.\\

\noindent\textbf{Keywords}\, Branching-time, indistinguishability, p-morphism, bisimulation.
\end{abstract}

\section*{Introduction}
Various work on logics of agency enriches the Peircean semantics
for branching-time logics with a notion of
undividedness at a moment $t$ between histories
passing through $t$, e.g. \cite[Belnap et al., 2001]{bpx01}.
In \cite[Zanardo, 1998]{zan98}, undividedness is generalized by the notion of indistinguishability.
Trees with indistinguishability relations provide a semantics for
a temporal language with tense and modal operators.
In this paper, in \S\ref{Preliminaries}, a language without the ``weak future operator''
(for every history there is a point in the future) is considered and
an alternative view of the semantics is presented.
Then, in \S\ref{bsim and pmorph}, a notion of p-morphism and a notion of bisimulation, wrt this language and semantics,
are given and a number of preservation results are proven.
Finally, in \S\ref{adding F}, the language is enriched with the ``weak future operator'',
a notion of p-morphism and a notion of bisimulation, wrt this language and semantics, are given and
a number of preservation results are proven.\\

\section{Preliminaries}\label{Preliminaries}
In this section, the syntax and the semantics are introduced.
After that, a different view of the semantics is presented.
The idea underlying this different view of the semantics
is not new, see e.g. \cite[\S3]{zan13}.
Behind this different view of the semantics, there is a different view of trees.
The idea underlying this different view of trees is not new, see e.g. \cite[\S4]{tho84}.

\subsection{Syntax}
Here, the language and what is a formula are defined.
\begin{definition}\label{language}
Let $PV$ be a denumerable set.
The elements of $PV$ are called \textit{propositional variables} or \textit{atoms}.
The set $\mathfrak{L}=PV\cup\{(,),\neg,\wedge,G,H,L\}$ is called \textit{language}.
\textit{Formulas} are strings of elements of the language
built up recursively according to the following rules:
\begin{enumerate}
\item For every $p\in PV$, $p$ is a formula.
\item If $\varphi$ and $\psi$ are formulas,
$(\neg\varphi)$, $(\varphi\wedge\psi)$, $(G\varphi)$, $(H\varphi)$ and
$(L\varphi)$ are formulas.
\end{enumerate}
$\vee$ and $\rightarrow$ are
the usual abbreviations. $P$ abbreviates $\neg H\neg$, $f$
abbreviates $\neg G\neg$
and $M$ abbreviates $\neg L\neg$.
The usual precedence rules among operators are assumed.
\end{definition}

\subsection{Semantics}\label{semantics}
Here, a number of definitions
are given in order to define the semantics. After them, satisfiability and validity
for a formula wrt a frame and wrt a model are defined.

\begin{definition}\label{DownwardLinear}
A binary relation $R$ over a set $A$ is said \textit{downward linear} provided, for
each $a,b,c\in A$ such that $bRa$ and $cRa$, $b=c$ or $bRc$ or $cRb$.
\end{definition}

\begin{definition}\label{Tree}
A \textit{tree} is a 2-tuple $(  T,< )$, where $T$ is a
set and $<$ is an irreflexive, transitive and downward linear binary
relation on $T$.
\end{definition}

\begin{definition}\label{History}
Given a tree $\mathfrak{T}=(  T,< )$, an \textit{history} of $\mathfrak{T}$ is an
$\subseteq$-maximal $<$-linear subset of $T$.
$H_{\mathfrak{T}}$ denotes the set of histories of $\mathfrak{T}$.
Given $t\in T$, $H_{\mathfrak{T},t}$ denotes the set of histories $h$
in $\mathfrak{T}$ passing through $t$, i.e. with $t\in h$.
\end{definition}

\begin{definition}\label{IndistinguishabilityFunction}
Given a tree $\mathfrak{T}=(  T,< )$, a
function $I:T\rightarrow\mathfrak{P}(H_{\mathfrak{T}}\times H_{\mathfrak{T}})$,
$t\mapsto I_t$, is called
\textit{indistinguishability function} if, for every $t\in T$, $I_t$ fulfills the following
conditions: 
\begin{enumerate}
\item $I_t$ is an equivalence relation over $H_{\mathfrak{T},t}$.
\item\label{Indistinguishability Condition} For every $s\in T$ and $h,k\in H_{\mathfrak{T},t}$,
if $hI_tk$ and $s<t$ then $hI_sk$.
\end{enumerate}
Given $t\in T$, $\Pi_{\mathfrak{T},t}$ denotes the set of the equivalence classes
of $I_t$. 
\end{definition}
The suffixes will be forgotten when there is no case of confusion.

\begin{definition}
A 3-tuple $\mathfrak{F}=
(  T,<,I )$ is called \textit{frame for $\mathfrak{L}$}
if $( 
T,< )$ is a tree, and
$I$ is an indistinguishability function.
A $4$-tuple $\mathfrak{M}=
(  T,<,I,V )$ is called \textit{model for
$\mathfrak{L}$} if $(  T,<,I )$ is a frame for $\mathfrak{L}$ and
$V:PV\rightarrow\mathfrak{P}(\bigcup_{t\in T}(\{t\}\times\Pi_{\mathfrak{T},t}))$
is a function, called
\textit{evaluation}.
\end{definition}

\begin{notation} The following conventions are assumed:
\begin{enumerate}
\item Given a frame $\mathfrak{F}=(T,<,I)$,
$(t,\pi)\in\mathfrak{F}$ means that $(t,\pi)\in\bigcup_{t\in T}(\{t\}\times\Pi_t)$.
\item Given a model $\mathfrak{M}=(T,<,I,V)$,
$(t,\pi)\in\mathfrak{M}$ means that $(t,\pi)\in\bigcup_{t\in T}(\{t\}\times\Pi_t)$.
\end{enumerate}
\end{notation}

\begin{definition}\label{sem}
Given a model $\mathfrak{M}= (  T,<,I,V )$, a couple $(t,\pi)\in\mathfrak{M}$,
an atom $p$ and two formulas $\varphi$ and $\psi$, define
\begin{enumerate}
\item $\mathfrak{M},(t,\pi)\models p$ provided $(t,\pi)\in V(p)$.

\item $\mathfrak{M},(t,\pi)\models \neg\varphi$ provided
$\mathfrak{M},(t,\pi)\not\models \varphi$.

\item $\mathfrak{M},(t,\pi)\models \varphi\wedge\psi$ provided 
$\mathfrak{M},(t,\pi)\models \varphi$ and $\mathfrak{M},(t,\pi)\models
\psi$.

\item $\mathfrak{M},(t,\pi)\models G\varphi$ provided, for each
$h\in \pi$ and each $s\in h$ with $t<s$,
$\mathfrak{M},(s,[h]_{I_s})\models\varphi$.

\item $\mathfrak{M},(t,\pi)\models H\varphi$ provided, for each $h\in
\pi$ and each $s\in h$ with $s<t$, $\mathfrak{M},(s,[h]_{I_s})\models\varphi$.

\item $\mathfrak{M},(t,\pi)\models L\varphi$ provided, 
for each $\rho\in\Pi_t$, $\mathfrak{M},(t,\rho)\models\varphi$.
\end{enumerate}

Consider a formula $\varphi$.
Given a frame $\mathfrak{F}$, \textit{$\varphi$ is satisfiable in $\mathfrak{F}$}
provided there is an evaluation function $V$ and a couple $(t,\pi)\in\mathfrak{F}$ such that
$(\mathfrak{F},V),(t,\pi)\models\varphi$;
\textit{$\varphi$ is valid in $\mathfrak{F}$} provided,
for every evaluation function $V$ and couple $(t,\pi)\in\mathfrak{F}$,
$(\mathfrak{F},V),(t,\pi)\models\varphi$.
Given a model $\mathfrak{M}$, \textit{$\varphi$ is satisfiable in $\mathfrak{M}$}
provided there is a couple $(t,\pi)\in\mathfrak{M}$ such that
$\mathfrak{M},(t,\pi)\models\varphi$;
\textit{$\varphi$ is valid in $\mathfrak{M}$} provided,
for every couple $(t,\pi)\in\mathfrak{M}$,
$\mathfrak{M},(t,\pi)\models\varphi$.
\end{definition}

\subsection{A different view of the semantics}\label{alternative semantics}
In this section, a different view of the semantics for branching-time logics
with indistinguishability relations is presented.
\begin{definition} Let $\mathfrak{M}=(T,<,I,V)$ be a model and $(t,\pi)$, $(s,\rho)\in\mathfrak{M}$.
Define:
\begin{enumerate}
 \item $(t,\pi)\prec(s,\rho)$ provided $t<s$ and $\pi\supseteq\rho$
 \item $(t,\pi)\sim(s,\rho)$ provided  $t=s$ and $\pi,\rho\in I_t$ (iff $t=s$ and $\pi,\rho\in I_s$).
 \item $(t,\pi)\preceq(s,\rho)$ provided $(t,\pi)\prec(s,\rho)$ or $(t,\pi)=(s,\rho)$.
 \item $(t,\pi)\succ(s,\rho)$
 (resp. $(t,\pi)\succeq(s,\rho)$)
 provided
 $(s,\rho)\prec(t,\pi)$
 (resp. $(s,\rho)\preceq(t,\pi)$).
\end{enumerate}
\end{definition}

\begin{proposition}
Let $\mathfrak{M}=(T,<,I,V)$ be a model, $(t,\pi)\in\mathfrak{M}$ and $\varphi$ a formula
of $\mathfrak{L}$. Then:
\begin{enumerate}
\item\label{G} $\mathfrak{M},(t,\pi)\models G\varphi$ iff, for all $(s,\rho)\in\mathfrak{M}$ with
$(t,\pi)\prec(s,\rho)$, $\mathfrak{M},(s,\rho)\models \varphi$.
\item\label{H} $\mathfrak{M},(t,\pi)\models H\varphi$ iff, for all $(s,\rho)\in\mathfrak{M}$ with
$(s,\rho)\prec(t,\pi)$, $\mathfrak{M},(s,\rho)\models \varphi$.
\item\label{L} $\mathfrak{M},(t,\pi)\models L\varphi$ iff, for all $(s,\rho)\in\mathfrak{M}$ with
$(t,\pi)\sim(s,\rho)$, $\mathfrak{M},(s,\rho)\models \varphi$.
\end{enumerate}
\end{proposition}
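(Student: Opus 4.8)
The plan is to establish all three items by the same move: unfold the relevant clause of Definition~\ref{sem} on the left, unfold the relation $\prec$ (resp.\ $\sim$) on the right, and check that the two universal quantifiers range over \emph{exactly the same} set of couples of $\mathfrak{M}$; the biconditional is then immediate.

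First I would record one elementary fact about the underlying tree $\mathfrak{T}=(T,<)$ of $\mathfrak{M}$: if $t<s$ then every history through $s$ also passes through $t$, i.e.\ $H_{\mathfrak{T},s}\subseteq H_{\mathfrak{T},t}$. Indeed, if $h$ is a history with $s\in h$, then $h\cup\{t\}$ is $<$-linear (using transitivity of $<$ for the elements of $h$ lying above $s$, and downward linearity of $<$ among the $<$-predecessors of $s$ for the rest), so $\subseteq$-maximality of $h$ forces $t\in h$.

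For item~\ref{G}, unfolding the clause for $G$, the left side says $\mathfrak{M},(s,[h]_{I_s})\models\varphi$ for all $h\in\pi$ and all $s\in h$ with $t<s$, and the right side says $\mathfrak{M},(s,\rho)\models\varphi$ for all $(s,\rho)\in\mathfrak{M}$ with $t<s$ and $\pi\supseteq\rho$; so it suffices to prove $\{(s,[h]_{I_s}):h\in\pi,\ s\in h,\ t<s\}=\{(s,\rho)\in\mathfrak{M}:(t,\pi)\prec(s,\rho)\}$. For ``$\subseteq$'', fix such $h,s$; then $t<s$, and for $\pi\supseteq[h]_{I_s}$ take $k\in[h]_{I_s}$: since $k\in H_{\mathfrak{T},s}$ and $t<s$, the auxiliary fact gives $t\in k$, and applying condition~\ref{Indistinguishability Condition} of Definition~\ref{IndistinguishabilityFunction} to $hI_sk$ with $t<s$ yields $hI_tk$, i.e.\ $k\in[h]_{I_t}=\pi$. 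For ``$\supseteq$'', given $(s,\rho)\in\mathfrak{M}$ with $t<s$ and $\pi\supseteq\rho$, pick any $h\in\rho$: then $s\in h$, so $\rho=[h]_{I_s}$, and $h\in\rho\subseteq\pi$, so $(s,\rho)$ has the required shape. Item~\ref{H} is symmetric, with $s<t$ in place of $t<s$: for $k\in\pi\subseteq H_{\mathfrak{T},t}$ the auxiliary fact now gives $s\in k$, so condition~\ref{Indistinguishability Condition} applied to $hI_tk$ with $s<t$ gives $hI_sk$, whence $\pi\subseteq[h]_{I_s}$, and the witness for the other inclusion is again any $h\in\pi$. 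Item~\ref{L} needs no argument: $(t,\pi)\sim(s,\rho)$ means exactly $s=t$ and $\rho\in\Pi_t$, so the right side of~\ref{L} is verbatim the defining clause for $L$.

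The routine parts are the auxiliary tree fact and item~\ref{L}; the only step demanding care is moving between the two descriptions of the index set in items~\ref{G} and~\ref{H} — one must track which direction ($t<s$ or $s<t$) condition~\ref{Indistinguishability Condition} on $I$ is invoked in, and, for the ``$\supseteq$'' inclusions, verify that an arbitrary equivalence class $\rho$ comparable to $\pi$ is genuinely realised as $[h]_{I_s}$ for some $h\in\pi$.
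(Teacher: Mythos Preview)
Your argument is correct and is essentially the paper's own proof, just repackaged: where the paper proves the two directions of each biconditional separately, you factor them through the equality of the two index sets, but the working step in both cases is the same application of the indistinguishability condition (Definition~\ref{IndistinguishabilityFunction}.\ref{Indistinguishability Condition}) to obtain $\pi\supseteq[h]_{I_s}$ (for $G$) or $[h]_{I_s}\supseteq\pi$ (for $H$). One minor improvement on your side: you make explicit the auxiliary tree fact $t<s\Rightarrow H_{\mathfrak{T},s}\subseteq H_{\mathfrak{T},t}$, which the paper silently relies on when invoking the indistinguishability condition.
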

\begin{proof}
 \ref{G}. Assume that $\mathfrak{M},(t,\pi)\models G\varphi$. Take any $(s,\rho)$ with $(t,\pi)\prec(s,\rho)$
 and any $h\in\rho$. Since $\pi\supseteq\rho$, $h\in\pi$.
 Then, since $t<s$, $\mathfrak{M},(s,[h]_{I_s})\models\varphi$. Thus, since $[h]_{I_s}=\rho$,
 $\mathfrak{M},(s,\rho)\models\varphi$. Hence, by arbitrariness of $(s,\rho)$,
 for all $(s,\rho)\in\mathfrak{M}$ with $(t,\pi)\prec(s,\rho)$, $\mathfrak{M},(s,\rho)\models \varphi$.
 
 Assume that, for all $(s,\rho)\in\mathfrak{M}$,
 $(t,\pi)\prec(s,\rho)$ entails $\mathfrak{M},(s,\rho)\models\varphi$.
 Take any $h\in\pi$ and any $s\in h$ with $t<s$. By ind.\:condition,
 $\pi\supseteq[h]_{I_s}$. Thus, $(t,\pi)\prec(s,[h]_{I_s})$.
 Hence, $\mathfrak{M},(s,[h]_{I_s})\models\varphi$. Then, by arbitrariness of $h$ and $s$,
 $\mathfrak{M},(t,\pi)\models G\varphi$.
 
 \ref{H}. Assume that $\mathfrak{M},(t,\pi)\models H\varphi$.
 Take any $(s,\rho)$ with $(s,\rho)\prec(t,\pi)$.
 Take any $h\in\pi$. 
 Then, since $s<t$, $\mathfrak{M},(s,[h]_{I_s})\models\varphi$.
 Moreover, since $\rho\supseteq\pi$, $h\in\rho$.
 Thus, $[h]_{I_s}=\rho$.
 Hence, $\mathfrak{M},(s,\rho)\models\varphi$.
 Therefore, by arbitrariness of $(s,\rho)$,
 for all $(s,\rho)\in\mathfrak{M}$ with $(s,\rho)\prec(t,\pi)$, $\mathfrak{M},(s,\rho)\models \varphi$.
 
 Assume that, for all $(s,\rho)\in\mathfrak{M}$,
 $(s,\rho)\prec(t,\pi)$ entails $\mathfrak{M},(s,\rho)\models\varphi$.
 Take any $h\in\pi$ and any $s\in h$ with $s<t$. By ind.\:condition,
 $[h]_{I_s}\supseteq\pi$. Thus, $(s,[h]_{I_s})\prec(t,\pi)$.
 Hence, $\mathfrak{M},(s,[h]_{I_s})\models\varphi$. Then, by arbitrariness of $h$ and $s$,
 $\mathfrak{M},(t,\pi)\models H\varphi$.
 
 \ref{L}. Assume that $\mathfrak{M},(t,\pi)\models L\varphi$. Take any $(s,\rho)$ with $(t,\pi)\sim(s,\rho)$.
 Then, since $t=s$ and $\pi,\rho\in I_t$, $\mathfrak{M},(s,\rho)\models \varphi$.
 Thus, by arbitrariness of $(s,\rho)$,
 for all $(s,\rho)\in\mathfrak{M}$ with $(t,\pi)\sim(s,\rho)$, $\mathfrak{M},(s,\rho)\models\varphi$.
 
 Assume that, for all $(s,\rho)\in\mathfrak{M}$,
 $(t,\pi)\sim(s,\rho)$ entails $\mathfrak{M},(s,\rho)\models\varphi$.
 Take any $\rho\in I_t$. Then, $(t,\pi)\sim(t,\rho)$.
 Thus, $\mathfrak{M},(t,\rho)\models\varphi$. Hence, by arbitrariness of $\rho$,
 $\mathfrak{M},(t,\pi)\models L\varphi$.
\end{proof}

\section{Bisimulation and p-morphism}\label{bsim and pmorph}
Here, a notion of p-morphism and a notion of bisimulation are given.
A number of preservation results are proven.

\begin{definition}\label{pmorphism}
 Let $\mathfrak{F}=(T,<,I)$ and $\mathfrak{F}'=(T',<',I')$ be two frames,
 $\prec$ (resp. $\prec'$) induced by $<$ (resp. $<'$) and $\sim$ (resp. $\sim'$)
 induced by $I$ (resp. $I'$).
 A function 
 $f:\bigcup_{t\in T}(\{t\}\times\Pi_t)
 \rightarrow\bigcup_{t'\in T'}(\{t'\}\times\Pi_{t'})$
 is called \textit{frame p-morphism
 from $\mathfrak{F}$ to $\mathfrak{F}'$} provided the following conditions hold:
 \begin{enumerate}
  \item[G-f.] For every $(t,\pi)$, $(s,\rho)\in\mathfrak{F}$, if $(t,\pi)\prec(s,\rho)$ then
  $f((t,\pi))\prec'f((s,\rho))$.
  \item[G-b.] For every $(t,\pi)\in\mathfrak{F}$, $(s,\rho)'\in\mathfrak{F}'$, if
  $f((t,\pi))\prec'(s,\rho)'$ then there is $(s,\rho)\in\mathfrak{F}$ such that
  $(t,\pi)\prec(s,\rho)$ and $f((s,\rho))=(s,\rho)'$.
  
  \item[H-b.] For every $(t,\pi)\in\mathfrak{F}$, $(s,\rho)'\in\mathfrak{F}'$, if
  $f((t,\pi))\succ'(s,\rho)'$ then there is $(s,\rho)\in\mathfrak{F}$ such that
  $(t,\pi)\succ(s,\rho)$ and $f((s,\rho))=(s,\rho)'$.
  
  \item[L-f.] For every $(t,\pi)$, $(s,\rho)\in\mathfrak{F}$, if $(t,\pi)\sim(s,\rho)$ then
  $f((t,\pi))\sim' f((s,\rho))$.
  \item[L-b.] For every $(t,\pi)\in\mathfrak{F}$, $(s,\rho)'\in\mathfrak{F}'$, if
  $f((t,\pi))\sim'(s,\rho)'$ then there is $(s,\rho)\in\mathfrak{F}$ such that
  $(t,\pi)\sim(s,\rho)$ and $f((s,\rho))=(s,\rho)'$.
 \end{enumerate}
 We say that two frames $\mathfrak{F}$ and $\mathfrak{F}'$ are \textit{p-morphic} provided there
 is a frame p-morphism from $\mathfrak{F}$ to $\mathfrak{F}'$.
\end{definition}

\begin{definition}
 Let $\mathfrak{M}=(T,<,I,V)$ and $\mathfrak{M}'=(T',<',I',V')$ be two models.
 A frame p-morphism from $(T,<,I)$ to $(T',<',I')$ is called \textit{model p-morphism
 from $\mathfrak{M}$ to $\mathfrak{M}'$} provided the following condition holds:
 \begin{enumerate}
  \item[PV.] For every $(t,\pi)\in\mathfrak{M}$,
  for every $p\in PV$, $(t,\pi)\in V(p)$ iff $f((t,\pi))\in V'(p)$.
  \end{enumerate}
 We say that two models $\mathfrak{M}$ and $\mathfrak{M}'$ are \textit{p-morphic} provided there
 is a model p-morphism from $\mathfrak{M}$ to $\mathfrak{M}'$.
\end{definition}

\begin{proposition}
Given two frame $\mathfrak{F}=(\mathfrak{F},<,I)$ and $\mathfrak{F}'=(\mathfrak{F}',<',I')$,
a function $f:\bigcup_{t\in T}(\{t\}\times\Pi_t)
 \rightarrow\bigcup_{t'\in T'}(\{t'\}\times\Pi_{t'})$ is a frame p-morphism
iff, for all $(t,\pi)\in \mathfrak{F}$ and $S\in\{\prec,\succ,\sim\}$,
$\{f((s,\rho))\in \mathfrak{F}'\,|\,(s,\rho)\in \mathfrak{F}, (t,\pi)S (s,\rho)\}=
\{(r,\sigma)\in \mathfrak{F}'\,|\,f((t,\pi))S'(r,\sigma)\}$.
\end{proposition}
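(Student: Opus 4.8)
The plan is to unfold the claimed set equality into its two defining inclusions and match each inclusion against one of the five clauses of Definition~\ref{pmorphism}, using the fact that $\succ$ and $\succ'$ are by definition the converses of $\prec$ and $\prec'$.

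For the forward direction I would assume $f$ is a frame p-morphism and fix $(t,\pi)\in\mathfrak{F}$ and $S\in\{\prec,\succ,\sim\}$. The inclusion $\subseteq$ amounts to: $(t,\pi)S(s,\rho)$ implies $f((t,\pi))S'f((s,\rho))$. For $S={\prec}$ this is exactly clause G-f, and for $S={\sim}$ it is exactly clause L-f. For $S={\succ}$ it is not literally one of the listed clauses, but $(t,\pi)\succ(s,\rho)$ means $(s,\rho)\prec(t,\pi)$, so G-f gives $f((s,\rho))\prec'f((t,\pi))$, i.e.\ $f((t,\pi))\succ'f((s,\rho))$. The inclusion $\supseteq$ amounts to: $f((t,\pi))S'(r,\sigma)$ with $(r,\sigma)\in\mathfrak{F}'$ implies there is $(s,\rho)\in\mathfrak{F}$ with $(t,\pi)S(s,\rho)$ and $f((s,\rho))=(r,\sigma)$; for $S={\prec}$, $S={\succ}$, $S={\sim}$ this is precisely clauses G-b, H-b, L-b (again reading $f((t,\pi))\succ'(r,\sigma)$ as $(r,\sigma)\prec' f((t,\pi))$ for the $\succ$ case).

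For the converse direction I would assume the set equality holds for every $(t,\pi)$ and every $S$, and simply read off each clause: G-f and G-b come from the $\subseteq$ and $\supseteq$ inclusions in the case $S={\prec}$; H-b comes from the $\supseteq$ inclusion in the case $S={\succ}$; and L-f, L-b come from the two inclusions in the case $S={\sim}$. This exhausts Definition~\ref{pmorphism}, so $f$ is a frame p-morphism.

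The only subtle point — hardly an obstacle — is that by ranging $S$ over $\{\prec,\succ,\sim\}$ the displayed condition also imposes a ``forth'' requirement for $\succ$, which the p-morphism definition does not list separately; one must note that this extra requirement is automatically entailed by G-f, because $\succ$ is just the converse of $\prec$, so there is neither a gap nor a redundancy. Every remaining step is a matter of reading the two sets of definitions side by side.
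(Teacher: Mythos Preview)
Your proposal is correct and follows essentially the same approach as the paper: match the two inclusions of the set equality, for each $S\in\{\prec,\succ,\sim\}$, against the corresponding forth/back clauses of Definition~\ref{pmorphism}. You are in fact slightly more explicit than the paper about the one subtlety, namely that the ``forth'' condition for $S={\succ}$ is not a listed clause but follows from G-f because $\succ$ is the converse of $\prec$; the paper handles this implicitly by simply invoking G-f for both $\prec$ and $\succ$.
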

\begin{proof}
Assume $f$ is a p-morphism. Take any $(t,\pi)\in \mathfrak{F}$ and $S\in\{\prec,\succ,\sim\}$.
Take any $f((s,\rho))\in\{f((s,\rho))\in \mathfrak{F}'\,|\,(s,\rho)\in \mathfrak{F}, (t,\pi)S (s,\rho)\}$.
Then, by condition G-f if $S$ is $\prec$ or $\succ$, or by condition L-f if $S$ is $\sim$,
$f((t,\pi))Sf((s,\rho))$. Thus $f((s,\rho))\in\{(r,\sigma)\in \mathfrak{F}'\,|\,f((t,\pi))S'(r,\sigma)\}$.
Take any $(r,\sigma)'\in\{(r,\sigma)\in \mathfrak{F}'\,|\,f((t,\pi))S'(r,\sigma)\}$.
Then, by condition G-b if $S$ is $\prec$, by condition H-b if $S$ is $\succ$,
or by condition L-b if $S$ is $\sim$, there is $(r,\sigma)\in \mathfrak{F}$ such that
$(t,\pi)S(r,\sigma)$ and $f((r,\sigma))=(r,\sigma)'$. Thus,
$(r,\sigma)'\in\{f((s,\rho))\in \mathfrak{F}'\,|\,(s,\rho)\in \mathfrak{F}, (t,\pi)S (s,\rho)\}$. Hence,
$\{f((s,\rho))\in \mathfrak{F}'\,|\,(s,\rho)\in \mathfrak{F}, (t,\pi)S (s,\rho)\}=
\{(r,\sigma)\in \mathfrak{F}'\,|\,f((t,\pi))S'(r,\sigma)\}$.

Assume, for every $(t,\pi)\in\mathfrak{F}$, every $S\in\{\prec,\succ,\sim\}$,
$\{f((s,\rho))\in \mathfrak{F}'\,|\,(s,\rho)\in \mathfrak{F}, (t,\pi)S (s,\rho)\}=
\{(r,\sigma)\in \mathfrak{F}'\,|\,f((t,\pi))S'(r,\sigma)\}$.
Take any $(t,\pi)$, $(s,\rho)\in\mathfrak{F}$ and any $S\in\{\prec,\sim\}$. Suppose
$(t,\pi)S(s,\rho)$.
Then,
$f((s,\rho))\in\{f((s,\rho))\in \mathfrak{F}'\,|\,(s,\rho)\in \mathfrak{F}, (t,\pi)S(s,\rho)\}$
Thus, $f((s,\rho))\in\{(r,\sigma)\in \mathfrak{F}'\,|\,f((t,\pi))S'(r,\sigma)\}$
Hence, $f((t,\pi))S'f((s,\rho))$ and
conditions G-f and L-f hold.
Moreover, take any $(t,\pi)\in\mathfrak{F}$, any $(s,\rho)'\in\mathfrak{F}'$
and $S\in\{\prec,\succ,\sim\}$. Suppose $f((t,\pi))S'(s,\rho)'$.
Then, $(s,\rho)'\in\{(r,\sigma)\in \mathfrak{F}'\,|\,f((t,\pi))S'(r,\sigma)\}$.
Thus, $(s,\rho)'\in\{f((s,\rho))\in \mathfrak{F}'\,|\,(s,\rho)\in \mathfrak{F}, (t,\pi)S (s,\rho)\}$.
Hence, there is $(s,\rho)\in\mathfrak{F}$, such that $f((s,\rho))=(s,\rho)'$ and $(t,\pi)S(s,\rho)$
and conditions G-b, H-b and L-b are satisfied. Therefore, $f$ is a frame p-morphism between
$\mathfrak{F}$ and $\mathfrak{F}'$.
\end{proof}

\begin{proposition}\label{p-morphism 1}
 Let $\mathfrak{M}=(T,<,I,V)$ and $\mathfrak{M}'=(T',<',I',V')$ be two models and
 $f$ a model p-morphism from $\mathfrak{M}$ to $\mathfrak{M'}$. Then, for any couple $(t,\pi)$
 of $\mathfrak{M}$ and any formula $\varphi$
 $$
 \mathfrak{M},(t,\pi)\models\varphi\mbox{ iff }\mathfrak{M}',f((t,\pi))\models\varphi.
 $$
\end{proposition}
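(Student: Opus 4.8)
The plan is to argue by induction on the structure of $\varphi$, after first rewriting the three modal clauses by means of the preceding proposition, so that $G$, $H$ and $L$ become universal quantifications over, respectively, the $\prec$-successors, the $\prec$-predecessors and the $\sim$-equivalent couples of the point in question. The base and Boolean cases come first. For $\varphi = p$ with $p\in PV$ the desired equivalence $\mathfrak{M},(t,\pi)\models p$ iff $\mathfrak{M}',f((t,\pi))\models p$ is literally condition PV. For $\varphi=\neg\psi$ and $\varphi=\psi_1\wedge\psi_2$ it follows at once from the induction hypothesis applied to $\psi$ (resp.\ $\psi_1,\psi_2$) at the same couple $(t,\pi)$, since satisfaction commutes with $\neg$ and $\wedge$ in both models.

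For $\varphi=G\psi$: by the preceding proposition, $\mathfrak{M},(t,\pi)\models G\psi$ holds iff $\mathfrak{M},(s,\rho)\models\psi$ for every $(s,\rho)\in\mathfrak{M}$ with $(t,\pi)\prec(s,\rho)$, and symmetrically $\mathfrak{M}',f((t,\pi))\models G\psi$ holds iff $\mathfrak{M}',(s,\rho)'\models\psi$ for every $(s,\rho)'\in\mathfrak{M}'$ with $f((t,\pi))\prec'(s,\rho)'$. Left to right: given $(s,\rho)'$ with $f((t,\pi))\prec'(s,\rho)'$, condition G-b supplies $(s,\rho)\in\mathfrak{M}$ with $(t,\pi)\prec(s,\rho)$ and $f((s,\rho))=(s,\rho)'$, so $\mathfrak{M},(s,\rho)\models\psi$ by the assumption on $(t,\pi)$ and hence $\mathfrak{M}',(s,\rho)'\models\psi$ by the induction hypothesis. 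Right to left: given $(s,\rho)\in\mathfrak{M}$ with $(t,\pi)\prec(s,\rho)$, condition G-f gives $f((t,\pi))\prec' f((s,\rho))$, so $\mathfrak{M}',f((s,\rho))\models\psi$, and the induction hypothesis yields $\mathfrak{M},(s,\rho)\models\psi$.

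The case $\varphi=H\psi$ is the mirror image, using the $H$-clause of the preceding proposition: the forth direction invokes H-b (applied to the instance $f((t,\pi))\succ'(s,\rho)'$), while the back direction again uses G-f, now read as ``$(s,\rho)\prec(t,\pi)$ implies $f((s,\rho))\prec' f((t,\pi))$'' — no separate ``H-forth'' clause is needed, because G-f is exactly the monotonicity of $f$ with respect to $\prec$ and covers both readings. The case $\varphi=L\psi$ is treated identically with $\sim,\sim'$ in place of $\prec,\prec'$: the forth direction uses L-b, the back direction uses L-f, and the symmetry of $\sim$ makes the two readings coincide automatically.

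There is no genuine obstacle here; the only thing needing care is the bookkeeping of which of the five p-morphism clauses discharges which of the six obligations $(\text{forth}/\text{back})\times(G/H/L)$ — in particular that the single condition G-f handles both the $G$-back and the $H$-back step — together with the observation that the induction hypothesis is only ever applied at couples of $\mathfrak{M}$, which is always legitimate because at each step the relevant couple of $\mathfrak{M}'$ has been exhibited as an $f$-image; no surjectivity of $f$ is required.
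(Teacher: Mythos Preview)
Your proof is correct and follows essentially the same line as the paper's: induction on $\varphi$, atomic case by PV, Boolean cases immediate, and each modal case handled in both directions via the appropriate p-morphism clauses (G-b/H-b/L-b for the $\mathfrak{M}\Rightarrow\mathfrak{M}'$ direction, G-f/L-f for the converse). In particular, your observation that no separate ``H-f'' clause is needed because G-f already gives monotonicity of $f$ with respect to $\prec$ in both readings is exactly how the paper proceeds as well.
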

\begin{proof}
 By induction on the complexity of $\varphi$. Suppose $\varphi$ is $p$, for arbitrary $p\in PV$.
 By condition PV, $\mathfrak{M},(t,\pi)\models p$ iff $\mathfrak{M}',f((t,\pi))\models p$.
 The boolean cases are easy.
 
 Suppose $\varphi$ is $G\psi$ (resp. $H\psi$, $L\psi$). Assume $\mathfrak{M},(t,\pi)\models G\psi$
 (resp. $\mathfrak{M},(t,\pi)\models H\psi$, $\mathfrak{M},(t,\pi)\models L\psi$).
 Consider any $(s,\rho)'\in\mathfrak{M}'$ with $f((t,\pi))\prec' (s,\rho)'$
 (resp. $f((t,\pi))\succ' (s,\rho)', f((t,\pi))\sim' (s,\rho)'$).
 By condition G-b (resp. H-b, L-b), there is $(s,\rho)\in\mathfrak{M}$ with $(t,\pi)\prec(s,\rho)$
 (resp. $(t,\pi)\succ(s,\rho)$, $(t,\pi)\sim(s,\rho)$) and
 $f((s,\rho))=(s,\rho)'$. Since, $\mathfrak{M},(t,\pi)\models G\psi$
 (resp. $\mathfrak{M},(t,\pi)\models H\psi$, $\mathfrak{M},(t,\pi)\models L\psi$),
 $\mathfrak{M},(s,\rho)\models\psi$. Then, by i.h., $\mathfrak{M}',(s,\rho)'\models\psi$.
 Thus, by arbitrariness of $(s,\rho)'$, $\mathfrak{M}',f((t,\pi))\models G\psi$
 (resp. $\mathfrak{M}',f(t,\pi)\models H\psi$, $\mathfrak{M}',f(t,\pi)\models L\psi$).
 
 Assume $\mathfrak{M}',f((t,\pi))\models G\psi$
 (resp. $\mathfrak{M}',f((t,\pi))\models H\psi$, $\mathfrak{M}',f((t,\pi))\models L\psi$).
 Take any $(s,\rho)\in\mathfrak{M}$ with
 $(t,\pi)\prec(s,\rho)$ (resp. $(t,\pi)\succ(s,\rho)$, $(t,\pi)\sim(s,\rho)$).
 By condition G-f (resp. G-f, L-f), $f((t,\pi))\prec'f((s,\rho))$
 (resp. $f((t,\pi))\succ'f((s,\rho))$, $f((t,\pi))\sim'f((s,\rho))$). As
 $\mathfrak{M}',f((t,\pi))\models G\psi$
 (resp. $\mathfrak{M}',f((t,\pi))\models H\psi$, $\mathfrak{M}',f((t,\pi))\models L\psi$),
 $\mathfrak{M}',f((s,\rho))\models\psi$.
 Then, by i.h, $\mathfrak{M},(s,\rho)\models\psi$. Thus, by arbitrariness of $(s,\rho)$,
 $\mathfrak{M},(t,\pi)\models G\psi$
 (resp. $\mathfrak{M},(t,\pi)\models H\psi$, $\mathfrak{M},(t,\pi)\models L\psi$).
\end{proof}

\begin{proposition}
Let $\mathfrak{F}=(T,<,I)$ and $\mathfrak{F}'=(T',<',I')$ be two frames.
Suppose that $\mathfrak{F}'$ is a p-morphic image of $\mathfrak{F}$.
Then, for every formula $\varphi$, if $\varphi$ is valid in $\mathfrak{F}$
then $\varphi$ is valid in $\mathfrak{F}'$. 
\end{proposition}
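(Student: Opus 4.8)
The plan is to reduce frame validity to the model-level preservation result already established in Proposition~\ref{p-morphism 1}. Since $\mathfrak{F}'$ is a p-morphic image of $\mathfrak{F}$, I would fix a \emph{surjective} frame p-morphism $f$ from $\mathfrak{F}$ to $\mathfrak{F}'$. To show that $\varphi$ is valid in $\mathfrak{F}'$, I take an arbitrary evaluation $V'$ on $\mathfrak{F}'$ and an arbitrary couple $(t',\pi')\in\mathfrak{F}'$, and I must show $(\mathfrak{F}',V'),(t',\pi')\models\varphi$.

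The key move is to pull $V'$ back along $f$: define, for each $p\in PV$, $V(p)=\{(t,\pi)\in\mathfrak{F}\mid f((t,\pi))\in V'(p)\}$. This $V$ is a legitimate evaluation on $\mathfrak{F}$, since by construction $V(p)\subseteq\bigcup_{t\in T}(\{t\}\times\Pi_t)$. Moreover, again by construction, condition PV holds: for every $(t,\pi)$ of $\mathfrak{M}:=(\mathfrak{F},V)$ and every $p\in PV$, $(t,\pi)\in V(p)$ iff $f((t,\pi))\in V'(p)$. Hence $f$ is a model p-morphism from $\mathfrak{M}=(\mathfrak{F},V)$ to $\mathfrak{M}'=(\mathfrak{F}',V')$.

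Now I apply Proposition~\ref{p-morphism 1}: for every couple $(t,\pi)$ of $\mathfrak{M}$, $\mathfrak{M},(t,\pi)\models\varphi$ iff $\mathfrak{M}',f((t,\pi))\models\varphi$. Since $\varphi$ is valid in $\mathfrak{F}$, in particular $(\mathfrak{F},V),(t,\pi)\models\varphi$ for every $(t,\pi)\in\mathfrak{F}$, so $\mathfrak{M}',f((t,\pi))\models\varphi$ for every such $(t,\pi)$. By surjectivity of $f$ there is $(t,\pi)\in\mathfrak{F}$ with $f((t,\pi))=(t',\pi')$, whence $(\mathfrak{F}',V'),(t',\pi')\models\varphi$. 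As $V'$ and $(t',\pi')$ were arbitrary, $\varphi$ is valid in $\mathfrak{F}'$.

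The only genuine point to watch is that ``p-morphic image'' must be read as ``image under a surjective frame p-morphism'': without surjectivity, couples of $\mathfrak{F}'$ outside the range of $f$ would not be controlled and the argument would break at the last step. Everything else is bookkeeping; in particular, no property specific to the operators $G,H,L$ is needed beyond what is already packaged into Proposition~\ref{p-morphism 1}, and the verification that the pulled-back $V$ is a well-formed evaluation is immediate from its definition.
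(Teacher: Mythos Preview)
Your argument is correct and matches the paper's proof essentially step for step: fix a surjective frame p-morphism, pull back an arbitrary $V'$ to $V$ via $V(p)=\{(t,\pi)\in\mathfrak{F}\mid f((t,\pi))\in V'(p)\}$ so that PV holds, invoke Proposition~\ref{p-morphism 1}, and use surjectivity to hit an arbitrary $(t',\pi')$. Your explicit remark that ``p-morphic image'' must be read as ``surjective'' is a useful clarification that the paper leaves implicit.
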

\begin{proof}
Let $\varphi$ be a formula valid in $\mathfrak{F}$.
Let $f$ be a surjective frame p-morphism from $\mathfrak{F}$ to
$\mathfrak{F}'$.
Let $V':PV\rightarrow\mathfrak{P}(\bigcup_{t'\in T'}(\{t'\}\times\Pi_{\mathfrak{T'},t'}))$
be an arbitrary evaluation function.
Take an arbitrary $(t,\pi)'\in\mathfrak{T}'$.
Define
$V:PV\rightarrow\mathfrak{P}(\bigcup_{t\in T}(\{t\}\times\Pi_{\mathfrak{T},t}))$
by, for every $p\in PV$,
$V(p)=\{(t,\pi)\in\mathfrak{F}\,|\,f((t,\pi))\in V'(p)\}$.
Then, $f$ satisfies condition PV.
Thus, $f$ is a model p-morphism from $(\mathfrak{F},V)$
to $(\mathfrak{F}',V')$.
Since $f$ is surjective,
there is $(t,\pi)\in\mathfrak{F}$ such that $f((t,\pi))=(t,\pi)'$.
Since $\varphi$ is valid in $\mathfrak{F}$, $(\mathfrak{F},V),(t,\pi)\models\varphi$.
Thus, by prop. \ref{p-morphism 1}, $(\mathfrak{F}',V'),(t,\pi)'\models\varphi$.
Therefore, by arbitrariness of $V'$ and $(t,\pi)'$, $\varphi$ is valid in $\mathfrak{F}'$.
\end{proof}

\begin{definition}\label{bisimulation}
Let $\mathfrak{M}=(T,<,I,V)$ and $\mathfrak{M}'=(T',<',I',V')$ be models.
Let $(t,\pi)\in\mathfrak{M}$ and $(t,\pi)'\in\mathfrak{M}'$.
A \textit{bisimulation between $(M,(t,\pi))$ and $(M',(t,\pi)')$} is a
relation $B\subseteq\bigcup_{t\in T}(\{t\}\times\Pi_{\mathfrak{T},t})\times
\bigcup_{t'\in T'}(\{t'\}\times\Pi_{\mathfrak{T}',t'})$ satisfying:
\begin{enumerate}
\item[B.] $(t,\pi)B(t,\pi)'$.
\end{enumerate}
and, for every $(s,\rho)\in\mathfrak{M}$ and $(s,\rho)'\in\mathfrak{M}'$ such that $(s,\rho)B(s,\rho)'$:
\begin{enumerate}
\item[PV.] for every $p\in PV$, $(s,\rho)\in V(p)$ iff $(s,\rho)'\in V'(p)$.
\item[G-f.] For every $(r,\sigma)\in\mathfrak{M}$, if $(s,\rho)\prec(r,\sigma)$ then
there is $(r,\sigma)'\in\mathfrak{M}'$ with $(s,\rho)'\prec'(r,\sigma)'$ and $(r,\sigma)B(r,\sigma)'$.
\item[G-b.] For every $(r,\sigma)'\in\mathfrak{M}'$, if $(s,\rho)'\prec'(r,\sigma)'$,
there is $(r,\sigma)\in\mathfrak{M}$ such that
$(s,\rho)\prec(r,\sigma)$ and $(r,\sigma)B(r,\sigma)'$.
\item[H-f.] For every $(r,\sigma)\in\mathfrak{M}$, if $(s,\rho)\succ(r,\sigma)$ then
there is $(r,\sigma)'\in\mathfrak{M}'$ with $(s,\rho)'\succ'(r,\sigma)'$ and $(r,\sigma)B(r,\sigma)'$.
\item[H-b.] For every $(r,\sigma)'\in\mathfrak{M}'$, if $(s,\rho)'\succ'(r,\sigma)'$,
there is $(r,\sigma)\in\mathfrak{M}$ such that
$(s,\rho)\succ(r,\sigma)$ and $(r,\sigma)B(r,\sigma)'$.
\item[L-f.] For every $(r,\sigma)\in\mathfrak{M}$, if $(s,\rho)\sim(r,\sigma)$ then
there is $(r,\sigma)'\in\mathfrak{M}'$ with $(s,\rho)'\sim'(r,\sigma)'$ and $(r,\sigma)B(r,\sigma)'$.
\item[L-b.] For every $(r,\sigma)'\in\mathfrak{M}'$, if $(s,\rho)'\sim'(r,\sigma)'$,
there is $(r,\sigma)\in\mathfrak{M}$ such that
$(s,\rho)\sim(r,\sigma)$ and $(r,\sigma)B(r,\sigma)'$.
\end{enumerate}
Given two models $\mathfrak{M}$ and $\mathfrak{M}'$, and $(t,\pi)\in\mathfrak{M}$ and
$(t,\pi)'\in\mathfrak{M}'$, we say that $(\mathfrak{M},(t,\pi))$ and
$(\mathfrak{M}',(t,\pi)')$ are \textit{bisimilar} provided there is a bisimulation between
$(\mathfrak{M},(t,\pi))$ and
$(\mathfrak{M}',(t,\pi)')$.
\end{definition}

\begin{proposition}\label{bisimulation 1}
Let $\mathfrak{M}$ and $\mathfrak{M}'$ be two models, and $(t,\pi)\in\mathfrak{M}$ and
$(t,\pi)'\in\mathfrak{M}'$ such that there is a bisimulation $B$ between $(\mathfrak{M},(t,\pi))$ and
$(\mathfrak{M}',(t,\pi)')$. Then, for every formula $\varphi$,
$\mathfrak{M},(t,\pi)\models\varphi$ iff $\mathfrak{M}',(t,\pi)'\models\varphi$.
\end{proposition}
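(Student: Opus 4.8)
The plan is to prove, by induction on the complexity of $\varphi$, the apparently stronger statement that for \emph{every} pair $(s,\rho)\in\mathfrak{M}$, $(s,\rho)'\in\mathfrak{M}'$ with $(s,\rho)B(s,\rho)'$ one has $\mathfrak{M},(s,\rho)\models\varphi$ iff $\mathfrak{M}',(s,\rho)'\models\varphi$. The proposition itself then follows at once from clause B, which guarantees $(t,\pi)B(t,\pi)'$. Throughout I would use the relational reading of $G$, $H$ and $L$ established in the proposition of \S\ref{alternative semantics}, since the forth/back clauses of a bisimulation are phrased in terms of $\prec$, $\succ$ and $\sim$ rather than in terms of the history-based clauses of Definition~\ref{sem}.

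For the base case, if $\varphi$ is an atom $p$, then clause PV applied to the pair $(s,\rho)B(s,\rho)'$ gives $(s,\rho)\in V(p)$ iff $(s,\rho)'\in V'(p)$, which is exactly the required equivalence. The Boolean cases $\neg\psi$ and $\psi\wedge\chi$ are immediate from the induction hypothesis, applied to the \emph{same} pair $(s,\rho)B(s,\rho)'$.

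For the modal step, consider $\varphi=G\psi$; the cases $H\psi$ and $L\psi$ are handled identically, replacing $\prec$ by $\succ$ (resp.\ $\sim$) and the clauses G-f, G-b by H-f, H-b (resp.\ L-f, L-b). Suppose $\mathfrak{M},(s,\rho)\models G\psi$ and take any $(r,\sigma)'\in\mathfrak{M}'$ with $(s,\rho)'\prec'(r,\sigma)'$. By clause G-b there is $(r,\sigma)\in\mathfrak{M}$ with $(s,\rho)\prec(r,\sigma)$ and $(r,\sigma)B(r,\sigma)'$; by the relational semantics of $G$, $\mathfrak{M},(r,\sigma)\models\psi$, hence by the induction hypothesis $\mathfrak{M}',(r,\sigma)'\models\psi$. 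As $(r,\sigma)'$ was arbitrary, $\mathfrak{M}',(s,\rho)'\models G\psi$. The converse direction is symmetric, using clause G-f in place of G-b: from $\mathfrak{M}',(s,\rho)'\models G\psi$ and any $(r,\sigma)$ with $(s,\rho)\prec(r,\sigma)$ one gets a matching $(r,\sigma)'$ with $(s,\rho)'\prec'(r,\sigma)'$ and $(r,\sigma)B(r,\sigma)'$, so $\mathfrak{M}',(r,\sigma)'\models\psi$ and then, by the induction hypothesis, $\mathfrak{M},(r,\sigma)\models\psi$. This closes the induction.

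I do not expect a genuine obstacle here: Definition~\ref{bisimulation} already builds in both a ``forth'' and a ``back'' clause for each of $\prec$, $\succ$ and $\sim$, and that symmetry is precisely what makes the biconditional pass through in both directions at the inductive step. The only two points requiring care are to state the induction in the strengthened ``for all $B$-related pairs'' form — so that the induction hypothesis is available at the matched successors, predecessors and indistinguishable points produced by the back-and-forth clauses — and to have recorded beforehand that $G$, $H$, $L$ can be read off $\prec$, $\succ$, $\sim$, which the proposition of \S\ref{alternative semantics} supplies.
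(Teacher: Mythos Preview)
Your proposal is correct and follows essentially the same route as the paper: induction on the complexity of $\varphi$, with the atomic case by PV, the Boolean cases trivially, and the $G$/$H$/$L$ cases handled via the back-and-forth clauses together with the relational reading of the modalities from \S\ref{alternative semantics}. Your explicit strengthening of the induction hypothesis to all $B$-related pairs is exactly what the paper's argument tacitly relies on when it applies the i.h.\ at the matched point $(s,\rho)B(s,\rho)'$.
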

\begin{proof}
 By induction on the complexity of $\varphi$. Suppose $\varphi$ is $p$, for arbitrary $p\in PV$.
 By condition PV, $\mathfrak{M},(t,\pi)\models p$ iff $\mathfrak{M}',(t,\pi)'\models p$.
 The boolean cases are easy.
 
 Suppose $\varphi$ is $G\psi$ (resp. $H\psi$, $L\psi$). Assume $\mathfrak{M},(t,\pi)\models G\psi$
 (resp. $\mathfrak{M},(t,\pi)\models H\psi$, $\mathfrak{M},(t,\pi)\models L\psi$).
 Consider any $(s,\rho)'\in\mathfrak{M}'$ with $(t,\pi)'\prec'(s,\rho)'$
 (resp. $(t,\pi)'\succ'(s,\rho)', (t,\pi)'\sim'(s,\rho)'$).
 By condition G-b (resp. H-b, L-b), there is $(s,\rho)\in\mathfrak{M}$ with $(t,\pi)\prec(s,\rho)$
 (resp. $(t,\pi)\succ(s,\rho)$, $(t,\pi)\sim(s,\rho)$) and
 $(s,\rho)B(s,\rho)'$. Since, $\mathfrak{M},(t,\pi)\models G\psi$
 (resp. $\mathfrak{M},(t,\pi)\models H\psi$, $\mathfrak{M},(t,\pi)\models L\psi$),
 $\mathfrak{M},(s,\rho)\models\psi$. Then, by i.h., $\mathfrak{M}',(s,\rho)'\models\psi$.
 Thus, by arbitrariness of $(s,\rho)'$, $\mathfrak{M}',(t,\pi)'\models G\psi$
 (resp. $\mathfrak{M}',(t,\pi)'\models H\psi$, $\mathfrak{M}',(t,\pi)'\models L\psi$).
 
 Assume $\mathfrak{M}',(t,\pi)'\models G\psi$
 (resp. $\mathfrak{M}',(t,\pi)'\models H\psi$, $\mathfrak{M}',(t,\pi)'\models L\psi$).
 Take any $(s,\rho)\in\mathfrak{M}$ with
 $(t,\pi)\prec(s,\rho)$ (resp. $(t,\pi)\succ(s,\rho)$, $(t,\pi)\sim(s,\rho)$).
 By condition G-f (resp. H-f, L-f), there is $(s,\rho)'\in\mathfrak{M}'$ such that
 $(t,\pi)'\prec'(s,\rho)'$
 (resp. $(t,\pi)'\succ'(s,\rho)'$, $(t,\pi)'\sim'(s,\rho)'$) and $(s,\rho)B(s,\rho)'$.
 As  $\mathfrak{M}',(t,\pi)'\models G\psi$
 (resp. $\mathfrak{M}',(t,\pi)'\models H\psi$, $\mathfrak{M}',(t,\pi)'\models L\psi$),
 $\mathfrak{M}',(s,\rho)'\models\psi$.
 Then, by i.h, $\mathfrak{M},(s,\rho)\models\psi$. Thus, by arbitrariness of $(s,\rho)$,
 $\mathfrak{M},(t,\pi)\models G\psi$
 (resp. $\mathfrak{M},(t,\pi)\models H\psi$, $\mathfrak{M},(t,\pi)\models L\psi$).
\end{proof}

\section{Adding the weak future operator $F$}\label{adding F}
In this section, the language is enriched with the ``weak future operator'' $F$.
A notion of p-morphism and a notion of bisimulation are given.
A number of preservation results are proven.
\begin{definition}
$\mathfrak{L}_F=\mathfrak{L}\cup\{F\}$ is called \textit{language}.
\textit{Formulas} are strings of elements of the language
built up recursively according to the rules of def. \ref{language} plus:
\begin{enumerate}
\item[3.] If $\varphi$ is a formula,
$(F\varphi)$ is a formula.
\end{enumerate}
The abbreviations of def. \ref{language} are assumed. In addition, $g$ abbreviates $\neg F\neg$.
The usual precedence rules among operators are assumed.
\end{definition}

\begin{definition}
 Semantics is defined as in def. \ref{sem}, plus
\begin{enumerate}
\item[7.] $\mathfrak{M},(t,\pi)\models F\varphi$ provided, for each
$h\in \pi$, there is $s\in h$ such that $t<s$ and
$\mathfrak{M},(s,[h]_{I_s})\models\varphi$.
\end{enumerate}
\end{definition}

\begin{definition}
Given a function $f:\bigcup_{t\in T}(\{t\}\times\Pi_t)
 \rightarrow\bigcup_{t'\in T'}(\{t'\}\times\Pi_{t'})$,
 for $i\in\{1,2\}$, $f_i$ denotes the $i$-th component of $f$.
\end{definition}

\begin{definition}
 Let $\mathfrak{F}=(T,<,I)$ and $\mathfrak{F}'=(T',<',I')$ be two frames.
 A function 
 $f:\bigcup_{t\in T}(\{t\}\times\Pi_t)
 \rightarrow\bigcup_{t'\in T'}(\{t'\}\times\Pi_{t'})$
 is called \textit{frame p-morphism
 from $\mathfrak{F}$ to $\mathfrak{F}'$} provided, in addition to G-f, G-b, H-b, L-f, L-b of def. \ref{pmorphism},
 the following conditions hold:
 \begin{enumerate}
  \item[F-f.] For every $(t,\pi)\in\mathfrak{M}$, for every $h'\in f_2((t,\pi))$, there is $h\in\pi$
  such that, for every $s\in h$ with $t<s$, there is $s'\in h'$ with 
  $f_1((t,\pi))<'s'$ and $(s',[h']_{I'_{s'}})=f((s,[h]_{I_s}))$.
  \item[F-b.] For every $(t,\pi)\in\mathfrak{M}$, for every $h\in \pi$, there is $h'\in f_2((t,\pi))$
  such that, for every $s'\in h'$ with $f_1((t,\pi))<'s'$, there is $s\in h$ such that
  $t<s$ and $f((s,[h]_{I_s}))=(s',[h']_{I'_{s'}})$. 
 \end{enumerate}
 We say that two frames $\mathfrak{F}$ and $\mathfrak{F}'$ are \textit{p-morphic} provided there
 is a frame p-morphism from $\mathfrak{F}$ to $\mathfrak{F}'$.
\end{definition}
\begin{definition}
 Let $\mathfrak{M}=(T,<,I,V)$ and $\mathfrak{M}'=(T',<',I',V')$ be two models.
 A frame p-morphism from $(T,<,I)$ to $(T',<',I')$ is called \textit{model p-morphism
 from $\mathfrak{M}$ to $\mathfrak{M}'$} provided $PV$ holds.
\end{definition}

\begin{proposition}\label{p-morphism F}
 Let $\mathfrak{M}=(T,<,I,V)$ and $\mathfrak{M}'=(T',<',I',V')$ be two models and
 $f$ a model p-morphism from $\mathfrak{M}$ to $\mathfrak{M'}$. Then, for any couple $(t,\pi)$
 of $\mathfrak{M}$ and any formula $\varphi$
 $$
 \mathfrak{M},(t,\pi)\models\varphi\mbox{ iff }\mathfrak{M}',f((t,\pi))\models\varphi.
 $$
\end{proposition}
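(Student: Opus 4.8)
The plan is to proceed by induction on the complexity of $\varphi$, mirroring the proof of Proposition \ref{p-morphism 1}. The atomic case follows immediately from condition PV, and the boolean cases ($\neg$, $\wedge$) from the induction hypothesis. For $\varphi$ of the form $G\psi$, $H\psi$ or $L\psi$ the argument is verbatim that of Proposition \ref{p-morphism 1}: the new notion of frame p-morphism still contains the conditions G-f, G-b, H-b, L-f, L-b, and these are all that proof uses — in particular G-f does double duty for the $G$- and $H$-backward steps, since $(t,\pi)\succ(s,\rho)$ just means $(s,\rho)\prec(t,\pi)$. So the only genuinely new work is the case $\varphi=F\psi$, and it is handled with conditions F-f and F-b; note that $g\psi$, being the abbreviation $\neg F\neg\psi$, is then covered by the boolean and $F$ cases.

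For the forward direction of the $F$ case I would assume $\mathfrak{M},(t,\pi)\models F\psi$ and fix an arbitrary $h'\in f_2((t,\pi))$. Condition F-f supplies an $h\in\pi$ such that every $s\in h$ with $t<s$ has a matching $s'\in h'$ with $f_1((t,\pi))<'s'$ and $(s',[h']_{I'_{s'}})=f((s,[h]_{I_s}))$. Applying $\mathfrak{M},(t,\pi)\models F\psi$ to this particular $h$ produces some $s\in h$ with $t<s$ and $\mathfrak{M},(s,[h]_{I_s})\models\psi$; feeding this $s$ into F-f yields the corresponding $s'\in h'$, and the induction hypothesis, applied to $(s,[h]_{I_s})$ and its image $f((s,[h]_{I_s}))$, gives $\mathfrak{M}',(s',[h']_{I'_{s'}})\models\psi$. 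By arbitrariness of $h'$, $\mathfrak{M}',f((t,\pi))\models F\psi$. The backward direction is symmetric, using F-b: fix an arbitrary $h\in\pi$, let F-b supply the witness $h'\in f_2((t,\pi))$, apply $\mathfrak{M}',f((t,\pi))\models F\psi$ to $h'$ to obtain $s'\in h'$ with $f_1((t,\pi))<'s'$ and $\mathfrak{M}',(s',[h']_{I'_{s'}})\models\psi$, pull this $s'$ back through F-b to an $s\in h$ with $t<s$ and $f((s,[h]_{I_s}))=(s',[h']_{I'_{s'}})$, and conclude $\mathfrak{M},(s,[h]_{I_s})\models\psi$ by the induction hypothesis; by arbitrariness of $h$, $\mathfrak{M},(t,\pi)\models F\psi$.

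The main obstacle — indeed essentially the only delicate point — is keeping the alternating quantifiers of F-f and F-b straight: both conditions have the shape ``for all $h'$ there is $h$ such that for all $s$ there is $s'$'', so one must instantiate the history $h$ (resp. $h'$) \emph{before} invoking the satisfaction hypothesis $\models F\psi$, and only afterwards extract the matching point $s'$ (resp. $s$). Once this order of instantiation is respected, the transport of $\psi$ across $f$ is nothing more than the induction hypothesis, and no other part of the earlier development requires modification.
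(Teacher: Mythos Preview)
Your proposal is correct and follows essentially the same approach as the paper: induction on formula complexity, deferring the non-$F$ cases to Proposition~\ref{p-morphism 1}, and handling $F\psi$ via conditions F-f and F-b. The only cosmetic difference is that the paper argues the $F$-case contrapositively (from $\mathfrak{M}',f((t,\pi))\not\models F\psi$ to $\mathfrak{M},(t,\pi)\not\models F\psi$ using F-f, and symmetrically with F-b), whereas you phrase both directions directly; the quantifier instantiations and uses of F-f/F-b are identical in substance.
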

\begin{proof}
 By induction on the complexity of $\varphi$.
 A part from the case in which $\varphi$ is $F\psi$, for some formula $\psi$, the proof goes
 as the proof of prop. \ref{p-morphism 1}.
 
 Suppose $\varphi$ is $F\psi$.
 Suppose $\mathfrak{M}',f((t,\pi))\not\models F\psi$. 
 Then, there is $h'\in f_2((t,\pi))$ such that, for every $s'\in h'$ with $f_1((t,\pi))<'s'$, $\mathfrak{M}',(s',[h']_{I_{s'}})\not\models\psi$.
 Thus, by condition F-f, there is $h\in\pi$ such that, for every $s\in h$ with $t<s$, there is
 $s'\in h'$ with $f_1((t,\pi))<s'$ and $(s',[h']_{I'_{s'}})=f((s,[h]_{I_s}))$.
 Therefore, by i.h., for every $s\in h$ with $t<s$, $\mathfrak{M},(s,[h]_{I_s})\not\models\psi$.
 Hence, by def., $\mathfrak{M},(t,\pi)\not\models F\psi$.
 
 Suppose $\mathfrak{M},(t,\pi)\not\models F\psi$.
 Then, there is $h\in\pi$ such that, for every $s\in h$ with $t<s$, $\mathfrak{M},(s,[h]_{I_{s}})\not\models\psi$.
 Thus, by condition F-b, there is $h'\in f_2((t,\pi))$ such that, for every $s'\in h'$ with $f_1((t,\pi))<s'$, there is
 $s\in h$ with $t<s$ and $f((s,[h]_{I_s}))=(s',[h']_{I'_{s'}})$.
 Therefore, by i.h., for every $s'\in h'$ with $f_1((t,\pi))<s'$, $\mathfrak{M}',(s',[h']_{I_{s'}})\not\models\psi$.
 Hence, by def., $\mathfrak{M},f((t,\pi))\not\models F\psi$.
\end{proof}

\begin{definition}
Let $\mathfrak{M}=(T,<,I,V)$ and $\mathfrak{M}'=(T',<',I',V')$ be models.
Let $(t,\pi)\in\mathfrak{M}$ and $(t,\pi)'\in\mathfrak{M}'$.
A \textit{bisimulation between $(M,(t,\pi))$ and $(M',(t,\pi)')$} is a
relation $B\subseteq\bigcup_{t\in T}(\{t\}\times\Pi_{\mathfrak{T},t})\times
\bigcup_{t'\in T'}(\{t'\}\times\Pi_{\mathfrak{T}',t'})$ satisfying, in addition to
B, PV, G-f, G-b, H-f, H-b, L-f, L-b of def. \ref{bisimulation}, for every $(s,\rho)\in\mathfrak{M}$ and $(s',\rho')\in\mathfrak{M}'$ such that $(s,\rho)B(s',\rho')$:
\begin{enumerate}
  \item[F-f.] For every $h'\in \rho'$, there is $h\in\rho$
  such that, for every $r\in h$ with $s<r$, there is $r'\in h'$ with 
  $s'<'r'$ and $(r,[h]_{I_r})B(r',[h']_{I'_{r'}})$.
  \item[F-b.] For every $h\in \rho$, there is $h'\in\rho'$
  such that, for every $r'\in h'$ with $s'<'r'$, there is $r\in h$ with 
  $s<r$ and $(r,[h]_{I_r})B(r',[h']_{I'_{r'}})$.
\end{enumerate}
Given two models $\mathfrak{M}$ and $\mathfrak{M}'$, and $(t,\pi)\in\mathfrak{M}$ and
$(t,\pi)'\in\mathfrak{M}'$, we say that $(\mathfrak{M},(t,\pi))$ and
$(\mathfrak{M}',(t,\pi)')$ are \textit{bisimilar} provided there is a bisimulation between
$(\mathfrak{M},(t,\pi))$ and
$(\mathfrak{M}',(t,\pi)')$.
\end{definition}

\begin{proposition}
Lets $\mathfrak{M}$ and $\mathfrak{M}'$ be two models, and $(t,\pi)\in\mathfrak{M}$ and
$(t',\pi')\in\mathfrak{M}'$ such that there is a bisimulation $B$ between $(\mathfrak{M},(t,\pi))$ and
$(\mathfrak{M}',(t',\pi'))$. Then, for every formula $\varphi$,
$\mathfrak{M},(t,\pi)\models\varphi$ iff $\mathfrak{M}',(t',\pi')\models\varphi$.
\end{proposition}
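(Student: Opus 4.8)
The plan is to prove, by induction on the complexity of $\varphi$, the slightly more general statement that for \emph{every} pair $(s,\rho)\in\mathfrak{M}$, $(s',\rho')\in\mathfrak{M}'$ with $(s,\rho)B(s',\rho')$ one has $\mathfrak{M},(s,\rho)\models\varphi$ iff $\mathfrak{M}',(s',\rho')\models\varphi$; the proposition is then the instance at the base pair $(t,\pi)B(t',\pi')$ given by condition B. The atomic case is handled by condition PV and the boolean cases by the induction hypothesis. For $\varphi$ equal to $G\psi$, $H\psi$ or $L\psi$ the argument is exactly the one in the proof of prop.~\ref{bisimulation 1}, using the pairs of forth/back conditions G-f/G-b, H-f/H-b, L-f/L-b together with the characterisations of $G$, $H$, $L$ through $\prec$, $\succ$, $\sim$. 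Hence the only new case is $\varphi=F\psi$, and here the plan is to mimic the $F$-case of the proof of prop.~\ref{p-morphism F}, replacing the p-morphism $f$ by the relation $B$.

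Concretely, for $\varphi = F\psi$ I would argue by contraposition, as the clause for $F$ has the shape ``for all histories, there is a future point''. First, assume $\mathfrak{M}',(s',\rho')\not\models F\psi$, so that there is $h'\in\rho'$ with $\mathfrak{M}',(r',[h']_{I'_{r'}})\not\models\psi$ for every $r'\in h'$ such that $s'<'r'$. Applying condition F-f to the pair $(s,\rho)B(s',\rho')$ and this history $h'$ yields a history $h\in\rho$ such that for every $r\in h$ with $s<r$ there is $r'\in h'$ with $s'<'r'$ and $(r,[h]_{I_r})B(r',[h']_{I'_{r'}})$; since $\mathfrak{M}',(r',[h']_{I'_{r'}})\not\models\psi$, the induction hypothesis gives $\mathfrak{M},(r,[h]_{I_r})\not\models\psi$, and as $r$ was arbitrary above $s$ on $h$ this witnesses $\mathfrak{M},(s,\rho)\not\models F\psi$. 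The converse direction is symmetric: from a history $h\in\rho$ all of whose points above $s$ falsify $\psi$, condition F-b supplies a history $h'\in\rho'$ such that every $r'\in h'$ with $s'<'r'$ has some $r\in h$ with $s<r$ and $(r,[h]_{I_r})B(r',[h']_{I'_{r'}})$, whence the induction hypothesis forces $\mathfrak{M}',(r',[h']_{I'_{r'}})\not\models\psi$ and so $\mathfrak{M}',(s',\rho')\not\models F\psi$.

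The one point that needs care --- and the only place I expect any friction --- is the quantifier bookkeeping in the $F$-case: the induction hypothesis must be invoked at the \emph{varying} bisimilar pairs $(r,[h]_{I_r})B(r',[h']_{I'_{r'}})$, which is exactly why the generalised statement over all bisimilar pairs is the right thing to prove; and one must notice that F-f (resp. F-b) delivers a \emph{single} history that serves uniformly for all of its future points, so that the existential ``there is a history all of whose future points falsify $\psi$'' is genuinely transferred. Apart from faithfully carrying the indistinguishability-class superscripts $[\cdot]_{I_r}$, $[\cdot]_{I'_{r'}}$ through the argument, the proof is routine.
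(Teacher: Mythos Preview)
Your proposal is correct and follows essentially the same approach as the paper: induction on formula complexity, with the $G$, $H$, $L$ cases handled as in prop.~\ref{bisimulation 1} and the $F$-case argued by contraposition via conditions F-f and F-b. Your explicit formulation of the inductive claim over \emph{all} bisimilar pairs $(s,\rho)B(s',\rho')$ is a nice clarification that the paper leaves implicit when invoking the induction hypothesis at $(s,[h]_{I_s})B(s',[h']_{I'_{s'}})$, but otherwise the arguments coincide step for step.
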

\begin{proof}
 By induction on the complexity of $\varphi$.
 A part from the case in which $\varphi$ is $F\psi$, for some formula $\psi$, the proof goes
 as the proof of prop. \ref{p-morphism 1}.
 
 Suppose $\varphi$ is $F\psi$.
 Suppose $\mathfrak{M}',(t',\pi')\not\models F\psi$. 
 Then, there is $h'\in \pi'$ such that, for every $s'\in h'$ with $t'<s'$, $\mathfrak{M}',(s',[h']_{I'_{s'}})\not\models\psi$.
 Thus, by condition F-f, there is $h\in\pi$ such that, for every $s\in h$ with $t<s$, there is
 $s'\in h'$ with $t'<'s'$ and $(s,[h]_{I_s})B(s',[h']_{I'_{s'}})$.
 Therefore, by i.h., for every $s\in h$ with $t<s$, $\mathfrak{M},(s,[h]_{I_s})\not\models\psi$.
 Hence, by def., $\mathfrak{M},(t,\pi)\not\models F\psi$.
 
 Suppose $\mathfrak{M},(t,\pi)\not\models F\psi$.
 Then, there is $h\in\pi$ such that, for every $s\in h$ with $t<s$, $\mathfrak{M},(s,[h]_{I_{s}})\not\models\psi$.
 Thus, by condition F-b, there is $h'\in \pi'$ such that, for every $s'\in h'$ with $t'<'s'$, there is
 $s\in h$ with $t<s$ and $(s,[h]_{I_s})B(s',[h']_{I'_{s'}})$.
 Therefore, by i.h., for every $s'\in h'$ with $t'<'s'$, $\mathfrak{M}',(s',[h']_{I'_{s'}})\not\models\psi$.
 Hence, by def., $\mathfrak{M}',(t',\pi')\not\models F\psi$.
\end{proof}

\bibliographystyle{plain}

\end{document}